\documentclass[12pt]{article}

\usepackage[T1]{fontenc}
\usepackage[utf8]{inputenc}
\usepackage{lmodern}
\usepackage{microtype}
\usepackage[letterpaper,margin=1in]{geometry}
\usepackage{amsmath,amssymb,amsthm,mathtools}
\usepackage{amsmath,amssymb,amsthm,mathtools}
\usepackage{tikz}
\usetikzlibrary{arrows.meta}
\usepackage[hidelinks]{hyperref}
\tikzset{
  dvertex/.style={
    circle,
    draw,
    fill=white,
    minimum size=7mm,
    inner sep=0pt,
    font=\small
  },
  cyclearc/.style={
    -{Stealth[length=2.2mm,width=1.5mm]},
    very thick,
    shorten <=2.5pt,
    shorten >=2.5pt
  },
  antarc/.style={
    -{Stealth[length=2.0mm,width=1.4mm]},
    thick,
    dashed,
    shorten <=2.5pt,
    shorten >=2.5pt
  }
}
\usepackage[hidelinks]{hyperref}

\newtheorem{theorem}{Theorem}
\newtheorem{lemma}[theorem]{Lemma}

\newtheorem{problem}[theorem]{Problem}
\theoremstyle{remark}

\newcommand{\UG}{\operatorname{UG}}
\newcommand{\cB}{\mathcal{B}}
\newcommand{\NP}{\mathsf{NP}}

\title{The Complexity of Mixed Arc-Disjoint Spanning Subdigraphs with Antistrong Connectivity}
\author{
Jiangdong Ai\thanks{School of Mathematical Sciences and LPMC,
Nankai University, Tianjin 300071, P.R. China.
Email: {\tt jd@nankai.edu.cn}.}
\and
Gregory Gutin\thanks{Corresponding author. Department of Computing, Security and Mathematics,
Royal Holloway, University of London, Egham, Surrey TW20 0EX, UK.
Email: {\tt g.gutin@rhul.ac.uk}.}
\and
Hui Lei\thanks{School of Statistics and Data Science, LPMC and KLMDASR,
Nankai University, Tianjin 300071, P.R. China.
Email: {\tt hlei@nankai.edu.cn}.}
\and
Yongtang Shi\thanks{Center for Combinatorics and LPMC,
Nankai University, Tianjin 300071, P.R. China.
Email: {\tt shi@nankai.edu.cn}.}}
\date{}

\begin{document}
\maketitle

\begin{abstract}
A trail is antidirected if its arcs alternate between forward and backward. A digraph $D$  is antistrong if, for every ordered pair of distinct vertices $x,y\in V(D)$, it contains a forward antidirected $(x,y)$-trail. Bang-Jensen, Bessy, Jackson and Kriesell
[J. Combin. Theory Ser. B 122 (2017), 68--90]
introduced antistrong connectivity and posed two problems concerning
mixed arc-disjoint spanning subdigraphs.  In the first problem, one
seeks an antistrong spanning subdigraph and an arc-disjoint strong
spanning subdigraph.  In the second, strong connectivity is replaced
by the requirement that the underlying graph of the second
subdigraph be 2-edge-connected.  Bang-Jensen et al.\ asked whether
each of the two problems can be solved in polynomial time.

We prove that the two associated decision problems are NP-complete.
The first remains NP-complete for digraphs with maximum out-degree
at most four and maximum in-degree at most five.  The second remains
NP-complete even for oriented digraphs that are strong and
antistrong, whose underlying graphs are 3-vertex-connected, and in
which all but at most two vertices have both in-degree and out-degree
at most four.  In particular, the latter hardness result does not
rely on digons.
\end{abstract}


\section{Introduction}

The packing of edge-disjoint spanning structures is a central theme in
graph theory and combinatorial optimization.  In undirected graphs, the
classical theorems of Nash-Williams and Tutte characterize the existence
of a prescribed number of edge-disjoint spanning trees
\cite{NashWilliams,Tutte}.  Edmonds' branching theorem provides a
fundamental directed counterpart for arc-disjoint rooted branchings
\cite{Edmonds}; see also the monograph of Bang-Jensen and
Gutin~\cite{BJGbook} for background on directed connectivity and
branchings.

A different line of research asks for decompositions into strong
spanning subdigraphs.  Such questions, often called \emph{strong arc
decomposition} problems, have been studied both from structural and
algorithmic viewpoints; see the survey of Bang-Jensen and
Kriesell~\cite{BJKsurvey}.  Although the problem is computationally
hard for general digraphs~\cite{BY}, precise positive results are known
for several structured classes, including tournaments and semicomplete
compositions~\cite{BJGY,BJY}, as well as broader classes of compositions
and products~\cite{SGA}.  These results concern homogeneous packings,
where the packed spanning subdigraphs satisfy the same connectivity
condition.  The present paper deals instead with \emph{mixed} packings,
in which the two spanning requirements are different.

Let $T=v_0e_1v_1e_2\cdots e_kv_k$ be a trail in a digraph.  An arc $e_i$ is \emph{forward} in $T$
if $e_i=v_{i-1}v_i$, and \emph{backward} if
$e_i=v_iv_{i-1}$.  The trail $T$ is \emph{antidirected} if its arcs alternate between forward and backward.  It is a
\emph{forward antidirected trail} if its first and last arcs are forward; equivalently, $k$ is odd and
\[
   e_i=
   \begin{cases}
      v_{i-1}v_i, & \text{if $i$ is odd},\\
      v_iv_{i-1}, & \text{if $i$ is even}.
   \end{cases}
\]

A digraph $D$ of order at least three is \emph{antistrong} if, for every ordered pair of distinct vertices $x,y$, it contains a forward antidirected $(x,y)$-trail.  This notion was introduced and
systematically studied by Bang-Jensen, Bessy, Jackson and Kriesell~\cite{BBJK}.  A central observation of their work is that antistrong connectivity is ordinary connectivity in a natural
bipartite representation.  Consequently, antistrong connectivity can be recognized in linear time, and several augmentation and
packing problems admit matroidal formulations.

In particular, antistrong spanning connectivity is governed by the graphic matroid of the bipartite representation, making polynomial and matroidal approaches plausible when it is considered in isolation.  However, these favourable properties do not extend to mixed packing settings where an antistrong spanning subdigraph must coexist with a spanning subdigraph satisfying a different connectivity requirement. The results below show that coupling it with a different
spanning-connectivity requirement nevertheless leads to NP-complete mixed packing problems.

A digraph $D$ is \emph{strong} if, for every ordered pair of distinct vertices $x,y\in V(D)$, it contains a directed
$(x,y)$-path. Here and throughout, $\UG(D)$ is obtained by suppressing the
orientations of the arcs of $D$ and then replacing all parallel
edges by a single edge.  Thus $\UG(D)$ is always a simple graph.
An edge of a graph is a \emph{bridge} if its deletion increases the number of connected components.  A graph is \emph{2-edge-connected} if it is connected and has no bridge.
A graph with at least four vertices is \emph{3-vertex-connected} if deleting any set of at most two vertices leaves a connected graph. A pair of opposite arcs forms a \emph{digon}.  A digraph is
\emph{oriented} if it contains no digon.
For a vertex $v$ of a digraph $D$, let $d_D^+(v)$ and $d_D^-(v)$ denote its out-degree and in-degree, respectively, and let $\Delta^+(D)$ and $\Delta^-(D)$ denote the corresponding maximum
degrees.
Two subdigraphs are \emph{arc-disjoint} if their arc
sets are disjoint.

Bang-Jensen et al.~\cite[Questions~9.1 and~9.2]{BBJK} asked the following two questions.

\begin{problem}[Antistrong--strong spanning packing]
\label{prob:AS-SC}
Can we decide in polynomial time whether a digraph $D$ contains arc-disjoint spanning
subdigraphs $D_1,D_2$ such that $D_1$ is antistrong and $D_2$ is strongly connected?
\end{problem}

\begin{problem}[Antistrong--2-edge-connected spanning packing]
\label{prob:AS-2EC}
Can we decide in polynomial time whether a digraph $D$ contains arc-disjoint spanning
subdigraphs $D_1,D_2$ such that $D_1$ is antistrong and
$\UG(D_2)$ is 2-edge-connected?
\end{problem}

Our main results show that the corresponding decision problems are NP-complete.  Consequently, both questions have negative answers
unless $\mathsf{P}=\mathsf{NP}$.

\begin{theorem}\label{thm:main-SC}
It is NP-complete to decide whether a digraph $D$ contains
arc-disjoint spanning subdigraphs $D_1,D_2$ such that $D_1$ is antistrong and $D_2$ is strong. This remains NP-complete even when
$\Delta^+(D)\le 4$ and $\Delta^-(D)\le 5$.
\end{theorem}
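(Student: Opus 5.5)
Membership in NP is immediate. Given a partition of the arcs into $D_1$ and $D_2$, we can check strong connectivity of $D_2$ in linear time. Antistrong connectivity of $D_1$ can also be checked in linear time, via the bipartite representation of Bang-Jensen et al.~\cite{BBJK}: $D_1$ is antistrong if and only if the bipartite graph $B(D_1)$ is connected. Here $B(D_1)$ has vertex classes $V'=\{v':v\in V\}$ and $V''=\{v'':v\in V\}$, and an edge $u'v''$ for each arc $uv$. (The order is at least three.) For hardness, we therefore only need to understand which arc sets are \emph{forced}. On the strong side, every vertex needs an out-arc and an in-arc in $D_2$. On the antistrong side, connectivity of $B(D_1)$ needs every vertex to have an out-arc and an in-arc in $D_1$. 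More strongly, it needs every cut of $B(D_1)$ to be crossed. So every vertex with out-degree $2$ in $D$ must send exactly one out-arc to each of $D_1$ and $D_2$, and similarly for in-degree $2$. This tight local forcing is the lever for the reduction.

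We reduce from a bounded-occurrence variant of 3-SAT, for instance monotone or (3,B)-SAT in which each variable occurs at most three times. We use the degree-2 forcing to build \emph{variable gadgets}. Each variable gadget is a directed cycle $x_1\to\cdots\to x_{2k}\to x_1$ whose vertices have in- and out-degree $2$ inside the gadget. Since $D_2$ must leave every vertex, exactly one of two perfect alternations of "cycle arc in $D_2$ / chord arc in $D_2$" survives. This encodes true/false: in one state the forward cycle arcs lie in $D_2$, in the other a shifted set does. For each clause $C$ we add a clause vertex $c$ with an arc into it from, and an arc out of it to, the literal positions in the corresponding variable gadgets. We also add a global hub vertex $r$, joined to the gadgets so that $D_2$ can be strong through $r$. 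The design goal is that the only arcs available to $D_2$ to enter or leave $c$ are those attached at literal positions whose state leaves the appropriate arc free. Then $c$ is reachable in $D_2$ if and only if some literal of $C$ is true. Conversely, $D_1$ must use the complementary arcs, and we must verify that $B(D_1)$ is connected in every consistent truth assignment. For that we add a small antistrong "backbone" (e.g.\ a few arcs through $r$ and auxiliary vertices) that connects the components of $B(D_1)$ regardless of the states.

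The proof then has two directions. (i) A satisfying assignment yields a partition: put the assigned states into $D_2$, route each clause vertex through a true literal, and check connectivity of $D_2$ via the hub and connectivity of $B(D_1)$ via the backbone. (ii) A valid partition forces each gadget into one of its two states by the degree-2 argument, and strong connectivity at each clause vertex yields a true literal. Bounded occurrence together with at most one or two hub connections per gadget vertex gives $\Delta^+\le4$ and $\Delta^-\le5$. The larger in-degree arises at vertices receiving both gadget arcs and a clause/hub arc.

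I expect the main obstacle to be step (i)'s verification that $B(D_1)$ is connected: antistrong connectivity lives on the split graph, where $v'$ and $v''$ are different vertices. The complementary arc set left by a valid $D_2$ can easily split into many components, e.g.\ alternating-parity pieces of each gadget cycle. I would first compute $B$ of a single gadget in each state, identify its components, and then place the backbone arcs so that they hit a representative of each component in both states. Crucially, these backbone arcs must be assigned to $D_1$ without damaging the forcing argument in (ii), which I would ensure by making the backbone vertices have degree $2$ as well.
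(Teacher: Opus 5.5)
Your NP-membership argument is correct. So is your observation that a vertex of out-degree (resp.\ in-degree) exactly two must send one out-arc (resp.\ receive one in-arc) on each side. The hardness part, however, is a design plan rather than a proof: no gadget is ever specified. Every load-bearing step is stated as a goal, not verified:
\begin{itemize}
\item that $D_2$ can enter or leave a clause vertex only through true literals;
\item that a ``small antistrong backbone'' reconnects $\cB(D_1)$ in every state;
\item the bounds $\Delta^+\le 4$, $\Delta^-\le 5$.
\end{itemize}
Moreover, the one concrete mechanism you rely on conflicts with what must be attached to it. Degree-two forcing applies only at vertices with no arcs beyond their two in- and two out-arcs. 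Yet literal positions must also carry clause, hub or backbone arcs, and at those vertices the split between $D_1$ and $D_2$ is no longer forced. Suppose a gadget is internally 2-regular with exactly two states, which requires its bipartite representation to be a single even cycle. Then in either state the gadget arcs left for $D_1$ form a perfect matching of that cycle, i.e.\ $2k$ separate components of $\cB(D_1)$. Each component can only be reconnected by a further $D_1$-arc leaving its tail or entering its head, and that extra arc is exactly what destroys the forcing at that vertex. Your remedy of giving backbone vertices degree two is self-defeating. By your own observation, such a vertex must put one of its two out-arcs into $D_2$, so backbone arcs cannot all be reserved for $D_1$.

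The missing idea is that no SAT encoding is needed. The paper reduces from the Bang-Jensen--Yeo problem, which is NP-complete for 2-regular digraphs $H$: decide whether $H$ has a spanning strong subdigraph $S$ with $\UG(H-A(S))$ connected. This problem already pairs strong connectivity with connectivity. Each vertex of $H$ is replaced by the four-vertex switch $Q$, consisting of a directed $4$-cycle $C$ and an arc-disjoint antistrong arc set $F$, with all external arcs at a single port. Because $\cB(Q[F])$ is already connected, putting every $F_v$ into $D_{\rm anti}$ makes $\cB(D_{\rm anti})$ connected as soon as the underlying graph of its linking arcs is connected. Conversely, contracting the eight clones of each switch shows that connectivity of $\cB(D_{\rm anti})$ forces that underlying graph to be connected. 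Strong connectivity of $D_{\rm str}$ likewise passes to its linking arcs by contracting each switch. The degree bounds follow from the port's internal degrees (out-degree $2$, in-degree $3$) plus at most two linking arcs in each direction.
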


\begin{theorem}\label{thm:main-2EC}
It is NP-complete to decide whether a digraph $D$ contains
arc-disjoint spanning subdigraphs $D_1,D_2$ such that $D_1$ is antistrong and $\UG(D_2)$ is 2-edge-connected. It remains NP-complete even for oriented digraphs $D$ that are strong and antistrong, whose underlying graphs $\UG(D)$ are
3-vertex-connected, and in which all but at most two vertices $v\in V(D)$ satisfy
$d^+(v)\leq4$ and $d^-(v)\leq4$.
\end{theorem}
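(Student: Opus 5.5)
We prove membership in $\NP$ directly and hardness by a reduction from a suitable NP-complete problem, using the bipartite representation of antistrong connectivity from \cite{BBJK}. Membership is routine: a certificate is the partition of $A(D)$ into $A(D_1)$ and $A(D_2)$. Antistrong connectivity of $D_1$ is checked in linear time as connectivity of its bipartite representation $B(D_1)$. This graph has vertex set $\{v^+,v^-: v\in V\}$ and an edge $u^+v^-$ for each arc $uv$. A forward antidirected $(x,y)$-trail corresponds to an $(x^+,y^-)$-walk in $B(D_1)$ using distinct edges. Hence $D_1$ is antistrong iff $B(D_1)$ is connected, with the usual care for the order-three caveats. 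Two-edge-connectivity of $\UG(D_2)$ is checked by bridge-finding.

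For hardness, the reformulation is the key. We must choose a set $A_1$ of arcs whose bipartite image is a connected spanning subgraph of $B(D)$, and hence contains a spanning tree with $2n-1$ edges. The complement $A_2$ must have a 2-edge-connected underlying graph. Since a spanning tree of $B(D)$ uses $2n-1$ arcs, we would design $D$ so that $|A(D)|$ is exactly or nearly $(2n-1)+2n'$. The second part must then be essentially a Hamiltonian-like 2-edge-connected spanning structure, in fact a Hamiltonian cycle of $\UG(D)$, or of a prescribed subgraph, when arc counts are tight.

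I would therefore reduce from Hamiltonian cycle in cubic (or max-degree-3) undirected graphs, or from a 3-SAT/monotone variant. Orient a graph $G$ suitably and attach gadgets. Each edge $e$ of $G$ becomes a small oriented gadget. Taking $e$ into $D_2$ leaves the remaining gadget arcs in $D_1$, and these must still keep the corresponding $v^+$, $v^-$ vertices connected in $B(D_1)$. Counting forces $\UG(D_2)$ to have exactly as many edges as vertices, so $D_2$ is a Hamiltonian cycle through the gadgets. Conversely, a Hamiltonian cycle of $G$ gives $D_2$ as the lifted cycle, and we verify that $B(D_1)$ is connected. The extra conditions are handled separately. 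Oriented follows from using no digons. $\UG(D)$ being 3-connected and $D$ being strong and antistrong come from adding one or two hub vertices adjacent to everything; these hubs are the two exceptional high-degree vertices. The degree bound at the remaining vertices follows from cubicity plus constant-size gadgets.

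The main obstacle is the soundness direction. We must show that any valid partition yields a Hamiltonian cycle of $G$, despite the hubs and the slack they introduce. I would first try to make the hub arcs forced into $D_1$. For example, a hub vertex $h$ of degree $k$ contributes $h^+$ and $h^-$ to $B(D)$. If we ensure that connecting the vertices $v^-$ in $B(D_1)$ needs essentially every hub arc, then counting shows that $D_2$ has at most $n$ usable edges on the non-hub vertices. Then 2-edge-connectivity, which gives minimum degree 2, forces a spanning cycle. Getting this tight count while keeping $\UG(D)$ 3-connected is the delicate part, and I would tune the gadgets by exact edge counting in $B(D)$.
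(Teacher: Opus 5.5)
\textbf{There is a genuine gap: the hardness part is a plan, not a proof.} Your membership argument is fine. For hardness, however, you specify neither the edge gadget nor how the hubs are wired. You also explicitly leave soundness open, and soundness is the entire content of the reduction.

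Worse, the soundness tool you intend (tight arc counting) is incompatible with the device you intend for the side conditions (two hubs adjacent to everything). Write $N=|V(D)|$.
\begin{itemize}
\item In a yes-instance, every vertex has an in-arc and an out-arc in $D_1$, since both of its clones must be non-isolated in $\cB(D_1)$. It also has at least two neighbours in $\UG(D_2)$. So every vertex meets at least four arcs.
\item In an oriented $D$, each non-hub vertex has exactly one arc to each hub, so it has at least two non-hub neighbours.
\item A tight count $|A(D)|=3N+O(1)$ leaves only $N+O(1)$ arcs once the $2N-O(1)$ hub arcs are removed.
\end{itemize}
Hence all but $O(1)$ non-hub vertices have exactly two non-hub neighbours. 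The non-hub part is then essentially a disjoint union of cycles, which cannot encode a cubic source graph. Such instances even have bounded treewidth, where the problem is polynomial by Courcelle's theorem.

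The two escape routes you mention do not help either. Relaxing the count removes your only soundness tool and leaves the hubs as shortcuts for $\UG(D_2)$. Forcing hub arcs into $D_1$ trivialises antistrongness: a hub whose arcs all leave it already joins every clone $v''$ in $\cB(D_1)$, after which antistrongness reduces to an essentially local out-degree condition.

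\textbf{The missing idea is a local cut argument in $\cB(D)$.} It places the Hamiltonian structure in the \emph{antistrong} part and needs no counting. The paper's construction is as follows.
\begin{itemize}
\item It reduces from Hamiltonian path in cubic bipartite graphs $G=(U,W;E)$, orienting every edge from $U$ to $W$ as a variable arc $x_ux_w$.
\item Each terminal is joined to two hubs $0,1$ by spokes $0x_u,1x_u$ and $x_w0,x_w1$.
\item A single connector $x_{u_*}2$ leads into a seven-vertex oriented core $R$, whose arcs split into a directed $7$-cycle $C_7$ and an antistrong $F_7$.
\end{itemize}
In $\cB(D)$, the clones $L=\{x_u':u\in U\}\cup\{x_w'':w\in W\}$ carry a copy of $G$. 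The spokes never touch $L$, and the connector is the only edge leaving $L$. So connectivity of $\cB(D_1)$ forces the variable arcs of $D_1$ to form a connected spanning subgraph $T$ of $G$.

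The other clone of each terminal is incident only with its two spokes, so $D_1$ uses a spoke at every terminal. Minimum degree two in $\UG(D_2)$ then forces $D_2$ to use a variable arc at every terminal. Cubicity gives $d_T(v)\le 2$, so $T$ is a Hamiltonian path or cycle.

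Bipartiteness is exactly what lets a single clone per terminal see all variable arcs. The hubs create no slack because they touch only clones outside $L$.

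Conversely, from a Hamiltonian path $P$ take:
\begin{itemize}
\item $D_1$: $F_7$, the spokes at $1$, the connector, and the arcs of $E(P)$;
\item $D_2$: $C_7$, the spokes at $0$, and the arcs of $E\setminus E(P)$.
\end{itemize}
$\UG(D_2)$ is 2-edge-connected because each of its non-core edges lies on a triangle $0x_ux_w0$.
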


For Problem~\ref{prob:AS-SC}, we reduce from the non-separating strong spanning subdigraph problem, which is NP-complete already for 2-regular digraphs~\cite{BY}.  The reduction uses a four-vertex switch containing two arc-disjoint spanning subdigraphs, one antistrong and the other strong.

For Problem~\ref{prob:AS-2EC}, we give a polynomial reduction from Hamiltonian Path in planar cubic bipartite graphs, which is
NP-complete by Munaro~\cite[Theorem~23]{Munaro}. This restriction
belongs to the classical line of planar Hamiltonicity hardness, including the seminal result of Garey, Johnson and Tarjan~\cite{GJT}.  The second reduction uses a seven-vertex oriented core containing an antistrong spanning subdigraph and an
arc-disjoint spanning directed $7$-cycle.  Orienting every edge of
the bipartite source graph $G=(U,W;E)$ from $U$ to $W$ yields an oriented input digraph. The same construction has a
3-vertex-connected underlying graph, and only two of its vertices can have unbounded in-degree or out-degree.

The paper is organized as follows. In Section~\ref{sec:prelim} we collect the common preliminaries, including the bipartite representation of a digraph and an elementary observation about quotients.  In Section~\ref{sec:strong} we introduce the four-vertex switch and prove Theorem~\ref{thm:main-SC}. In
Section~\ref{sec:2ec} we introduce the seven-vertex oriented core and prove Theorem~\ref{thm:main-2EC}.  In the final section, we present some concluding remarks.

\section{Preliminaries}\label{sec:prelim}

All digraphs in this paper are finite and loopless.  Opposite arcs are allowed, but there is at most one arc with any prescribed ordered pair of ends.  All undirected graphs are finite and simple.
A subdigraph is \emph{spanning} if it has the same vertex set as the ambient digraph.

For distinct vertices $u,v$, we write $uv$ for the arc directed from $u$ to $v$.  For $X\subseteq A(D)$, we write $D[X]=(V(D),X)$ and $D-X=(V(D),A(D)\setminus X)$.
For a graph $G$ and $F\subseteq E(G)$, let $d_F(v)$ denote the degree of $v$ in the spanning subgraph $(V(G),F)$. For $Z\subseteq V(G)$, let $\delta_G(Z)$ denote the set of edges of $G$ with exactly one endpoint in $Z$, and let $G-Z$ denote the graph obtained by deleting the vertices in $Z$ (and all incident edges).

For a digraph $D=(V,A)$, its \emph{bipartite representation} is the
bipartite graph $\cB(D)=(V'\cup V'',E)$ with $E=\{u'v'':uv\in A\}$ where $V'=\{v':v\in V\}$ and $V''=\{v'':v\in V\}$ are disjoint copies of $V$. We refer to $v'$ and $v''$ as the two \emph{clones} of $v$ in
$\cB(D)$.  We use the following characterization of
Bang-Jensen, Bessy, Jackson and Kriesell~\cite{BBJK}.

\begin{theorem}[Bang-Jensen--Bessy--Jackson--Kriesell]
\label{thm:bipartite}
Let $D$ be a digraph of order at least three.  Then $D$ is antistrong if and only if $\cB(D)$ is connected.
\end{theorem}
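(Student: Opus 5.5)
The plan is to translate forward antidirected trails in $D$ into ordinary walks in $\cB(D)$ and back. The basic observation is that an arc $uv$ of $D$ corresponds to the edge $u'v''$ of $\cB(D)$. Suppose $T=v_0e_1v_1\cdots e_kv_k$ is a forward antidirected trail, so $k$ is odd. Then $e_1=v_0v_1$ gives the edge $v_0'v_1''$, and $e_2=v_2v_1$ gives the edge $v_2'v_1''$. Continuing, the odd-indexed vertices of $T$ are visited through their $''$-clones and the even-indexed vertices through their $'$-clones. Hence $T$ corresponds to a walk in $\cB(D)$ from $v_0'$ to $v_k''$, and the arcs of $T$ are distinct exactly when the edges of this walk are distinct. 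Conversely, any trail in $\cB(D)$ from $x'$ to $y''$ alternates sides and has odd length. Reading its edges back as arcs yields a forward antidirected $(x,y)$-trail in $D$. Since every walk contains a path (in particular a trail) between its ends, $D$ has a forward antidirected $(x,y)$-trail if and only if $x'$ and $y''$ lie in the same component of $\cB(D)$. I would record this as a lemma first.

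For the direction ``$\cB(D)$ connected $\Rightarrow$ $D$ antistrong'', the lemma applies immediately: for distinct $x,y$, the vertices $x'$ and $y''$ are joined by a path in $\cB(D)$.

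For the converse, assume $D$ is antistrong. Then $x'$ and $y''$ lie in a common component for every ordered pair of distinct vertices $x,y$. I must show every pair of vertices of $\cB(D)$ lies in one component, and there are three cases.
\begin{itemize}
\item $x'$ and $z''$ with $x\neq z$: this is the hypothesis.
\item $x'$ and $y'$ with $x\ne y$: choose a third vertex $z\notin\{x,y\}$, which exists because $|V(D)|\ge 3$. Then both $x'$ and $y'$ lie in the component of $z''$. The $''$--$''$ case is symmetric.
\item $x'$ and $x''$: pick $z\ne x$. Then $x'\sim z''$, since $(x,z)$ is an ordered pair, and $z'\sim x''$. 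Moreover $z''$ and $z'$ are in one component, since both are joined to $w'$ or $w''$ for a third vertex $w$ (for instance $z'\sim w''$ and $w''\sim \ldots$). More cleanly, $z''\sim x'$ and $x'\sim w''\sim z'$ via the pairs $(x,w)$ and $(z,w)$.
\end{itemize}
Thus $\cB(D)$ is connected.

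The only delicate point is the clone-pair case $x'$ versus $x''$. This is exactly where the order-at-least-three hypothesis is needed: with two vertices, a digon $\{x,y\}$ gives a disconnected $\cB(D)$ even though $D$ is antistrong. I will spell out the chain $x'\!-\!w''\!-\!z'\!-\!x''$ using the pairs $(x,w)$, $(z,w)$ and $(z,x)$ with $w\notin\{x,z\}$. The rest is the routine verification of the alternating-clone bijection.
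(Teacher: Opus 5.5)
The paper gives no proof of this statement. It is quoted from Bang-Jensen, Bessy, Jackson and Kriesell~\cite{BBJK} and used as a black box, so there is no in-paper argument to compare against.

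Your argument is correct and complete, and it is the natural bipartite-translation proof. The map $uv\mapsto u'v''$ is a bijection from $A(D)$ onto $E(\cB(D))$. Consequently, forward antidirected $(x,y)$-trails in $D$ correspond exactly to $(x',y'')$-trails in $\cB(D)$, with odd-indexed trail vertices read through their $''$-clones. Your use of a third vertex $z\notin\{x,y\}$ is precisely where $|V(D)|\ge 3$ is needed, as your two-vertex digon example illustrates.

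One simplification: your third case ($x'$ versus $x''$) is redundant. Your second case and its symmetric version already show that all of $V'$ lies in one component and all of $V''$ lies in one component. Your first case shows that some $x'$ lies in the component of some $z''$, so $\cB(D)$ is connected. You can therefore drop that bullet, including the false start about $z'$ and $z''$ being joined ``to $w'$ or $w''$'', rather than patching it.
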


For a partition $\mathcal P$ of the vertex set of a digraph $D$, the quotient $D/\mathcal P$ is obtained by identifying every part of $\mathcal P$ to one vertex, deleting loops, and suppressing parallel arcs with the same orientation.  For an undirected graph $G$, the quotient $G/\mathcal P$ is defined similarly, with parallel edges replaced by a single edge. These suppressions do not affect
strong connectivity or connectivity.

\begin{lemma}\label{lem:quotient}
Let $\mathcal P$ be a partition of the relevant vertex set.
\begin{enumerate}
\renewcommand{\labelenumi}{\textup{(\roman{enumi})}}
\item If a digraph $D$ is strong, then $D/\mathcal P$ is strong.
\item If a graph $G$ is connected, then $G/\mathcal P$ is connected.
\end{enumerate}
\end{lemma}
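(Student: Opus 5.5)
The plan is to show, for both parts, that every directed path in $D$ (respectively every path in $G$) maps to a walk in the quotient. Strong connectivity and connectivity both follow at once, since both are statements about the existence of such walks between arbitrary pairs of vertices. Let $\pi$ denote the map sending each vertex to the part of $\mathcal P$ containing it. Each part becomes a single vertex of the quotient, so it suffices to connect any two distinct parts $X,Y\in\mathcal P$.

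For (i), I pick $x\in X$ and $y\in Y$. Because $D$ is strong and $x\neq y$, there is a directed $(x,y)$-path $x=v_0v_1\cdots v_k=y$ in $D$. I then look at the sequence $\pi(v_0),\pi(v_1),\dots,\pi(v_k)$. For each $i$, there are two cases. If $\pi(v_{i-1})=\pi(v_i)$, the arc $v_{i-1}v_i$ became a loop and was deleted, so I simply drop the repeated entry. If $\pi(v_{i-1})\neq\pi(v_i)$, the arc $v_{i-1}v_i$ yields an arc $\pi(v_{i-1})\pi(v_i)$ in $D/\mathcal P$. Suppressing parallel arcs of the same orientation keeps one such arc, so the arc is still present. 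After deleting consecutive repetitions, the sequence is therefore a directed $(X,Y)$-walk in $D/\mathcal P$. It contains a directed $(X,Y)$-path, and so $D/\mathcal P$ is strong.

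For (ii), the argument is identical with undirected paths and edges in place of directed paths and arcs: an edge $uv$ of $G$ with $\pi(u)\ne\pi(v)$ gives an edge $\pi(u)\pi(v)$ of $G/\mathcal P$. If the quotient has only one vertex, both statements hold trivially.

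I do not expect any real obstacle. The only point requiring care is that deleting loops and suppressing parallel arcs or edges never removes an adjacency between distinct parts, and that is exactly what the definition of the quotient guarantees.
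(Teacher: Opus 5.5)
Your proof is correct and follows essentially the same route as the paper: project a (directed) path onto the quotient, observe that arcs inside a part become deleted loops while arcs between distinct parts survive the suppression of parallel arcs, and read off a (directed) walk between any two parts. You additionally spell out the removal of consecutive repetitions and the trivial one-part case, both of which the paper leaves implicit.
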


\begin{proof}
A directed path in $D$ projects to a directed walk in
$D/\mathcal P$. Deleting loops and suppressing parallel arcs do not destroy this walk, proving (i). The same argument with an undirected path proves (ii).
\end{proof}

\section{Packing an antistrong and a strong spanning subdigraph}\label{sec:strong}

We use the following NP-complete problem of Bang-Jensen and
Yeo~\cite[Theorem~1.6]{BY}.

\begin{problem}\label{prob:NSSS}
The input is a 2-regular digraph $H$, meaning that every vertex has
in-degree two and out-degree two.  Decide whether $H$ has a spanning
strong subdigraph $S$ such that $\UG(H-A(S))$ is connected.
\end{problem}
We next define the switch used in the first reduction.  Let $Q$ be the
digraph on vertex set $\{0,1,2,3\}$ with two disjoint arc sets $C=\{01,12,23,30\}$ and $F= \{10,21,32,02,20,13,31\}$.
We call $0$ the \emph{port} of $Q$. Whenever $Q$ is used inside a larger construction, every arc entering or leaving the copy is incident with its port. The switch is shown in
Figure~\ref{fig:switch-Q}.

\begin{figure}[htbp]
\centering
\begin{tikzpicture}[scale=1.15]
  \node[dvertex] (q0) at (0,1.4) {$0$};
  \node[dvertex] (q1) at (2.4,1.4) {$1$};
  \node[dvertex] (q2) at (2.4,-1.0) {$2$};
  \node[dvertex] (q3) at (0,-1.0) {$3$};

  \draw[cyclearc] (q0) -- (q1);
  \draw[cyclearc] (q1) -- (q2);
  \draw[cyclearc] (q2) -- (q3);
  \draw[cyclearc] (q3) -- (q0);

  \draw[antarc] (q1) to[bend right=20] (q0);
  \draw[antarc] (q2) to[bend right=20] (q1);
  \draw[antarc] (q3) to[bend right=20] (q2);
  \draw[antarc] (q0) to[bend left=14] (q2);
  \draw[antarc] (q2) to[bend left=14] (q0);
  \draw[antarc] (q1) to[bend left=14] (q3);
  \draw[antarc] (q3) to[bend left=14] (q1);

  \node[font=\small,anchor=west] at (3.2,0.65) {solid: $C$};
  \draw[cyclearc] (3.2,0.25) -- (4.2,0.25);

  \node[font=\small,anchor=west] at (3.2,-0.15) {dashed: $F$};
  \draw[antarc] (3.2,-0.55) -- (4.2,-0.55);

  \node[font=\small,anchor=west] at (3.2,-1.05) {port: $0$};
\end{tikzpicture}
\caption{The four-vertex switch $Q$.  Solid arcs form $C$, and
dashed arcs form $F$; vertex $0$ is the port.}
\label{fig:switch-Q}
\end{figure}

\begin{lemma}\label{lem:switch}
The following statements hold.
\begin{enumerate}
\renewcommand{\labelenumi}{\textup{(\roman{enumi})}}
\item $Q[C]$ is a directed 4-cycle.  In particular, it is strong and
$\UG(Q[C])$ is 2-edge-connected.
\item $Q[F]$ is antistrong.
\item $C\cap F=\emptyset$, and $\UG(Q[C\cup F])\cong K_4$.
\end{enumerate}
\end{lemma}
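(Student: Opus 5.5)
This is a finite check, and the plan is to verify each item directly from the definitions, using Theorem~\ref{thm:bipartite} for the antistrong part.

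For (i), the arcs $01,12,23,30$ form the closed directed walk $0\to1\to2\to3\to0$ through all four vertices. So $Q[C]$ is a spanning directed $4$-cycle. It is therefore strong, and its underlying graph is the undirected $4$-cycle $C_4$. Every edge of $C_4$ lies on a cycle, so none is a bridge, and $C_4$ is connected; hence it is 2-edge-connected.

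For (iii), disjointness is immediate: the arcs of $C$ go from $i$ to $i+1 \pmod 4$, and $F$ contains none of these pairs. For the underlying graph, $C$ contributes the edges $01,12,23,03$, while $F$ contributes the diagonals $02$ (via $02$ and $20$) and $13$ (via $13$ and $31$), together with repeats of $01,12,23$. After suppressing parallel edges we get all six pairs on $\{0,1,2,3\}$, which is $K_4$. We also check that all arcs are distinct ordered pairs, as the convention of Section~\ref{sec:prelim} requires.

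For (ii), the substantive step, we apply Theorem~\ref{thm:bipartite}. $Q$ has order $4\ge 3$, so it suffices to show that $\cB(Q[F])$ is connected. This bipartite graph has $8$ vertices $0',\dots,3',0'',\dots,3''$ and one edge $u'v''$ for each arc $uv\in F$:
\[
1'0'',\ 2'1'',\ 3'2'',\ 0'2'',\ 2'0'',\ 1'3'',\ 3'1''.
\]
That is $7$ edges on $8$ vertices, so connectivity forces this graph to be a spanning tree, and it is enough to trace one connected traversal. Starting from $0'$: $0'$ is adjacent to $2''$; $2''$ is adjacent to $3'$; $3'$ is adjacent to $1''$; $1''$ is adjacent to $2'$; $2'$ is adjacent to $0''$; $0''$ is adjacent to $1'$; and $1'$ is adjacent to $3''$. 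This reaches all eight vertices, so $\cB(Q[F])$ is connected and $Q[F]$ is antistrong.

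The only place an error could hide is this edge listing and the path-tracing in (ii). With exactly $n-1$ edges there is no slack, so a single mistranscribed arc would disconnect the graph, and I would double-check each clone pair against $F$.
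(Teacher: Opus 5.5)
Your proof is correct and follows the paper's proof: the paper also handles (ii) by showing that $\cB(Q[F])$ is exactly the path $0'-2''-3'-1''-2'-0''-1'-3''$ and applying Theorem~\ref{thm:bipartite}, and it treats (i) and (iii) as direct inspection of the arc sets. Your edge-by-edge check of this path against $F$ is accurate.
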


\begin{proof}
We only need to verify (ii). The bipartite representation of $Q[F]$ is the path
$0'-2''-3'-1''-2'-0''-1'-3''$.
It is therefore connected, so Theorem~\ref{thm:bipartite} implies that $Q[F]$ is antistrong.  The remaining assertions follow directly from the displayed arc sets.
\end{proof}

We write $Q_z$ for a copy of $Q$ indexed by a symbol $z$, and also
write $z$ for its port when no ambiguity can arise.  The corresponding
copies of $C$ and $F$ are denoted by $C_z$ and $F_z$.

\begin{proof}[Proof of Theorem~\ref{thm:main-SC}]
Membership in $\NP$ is immediate.  Given two arc sets, one checks
arc-disjointness, strong connectivity of the second subdigraph, and
connectivity of the bipartite representation of the first subdigraph
in polynomial time.

We reduce from Problem~\ref{prob:NSSS}.  Let $H=(V,A)$ be a 2-regular
digraph.  Replace every vertex $v\in V$ by a copy $Q_v$ of the switch.
For every arc $uv\in A$, add the arc from the port of $Q_u$ to the port
of $Q_v$.  Denote the resulting digraph by $\widehat H$.  We call these
new arcs \emph{linking arcs}.  The construction is clearly polynomial.

Suppose first that $H$ contains a spanning strong subdigraph $S$ such that $\UG(H-A(S))$ is connected. Define two spanning subdigraphs of $\widehat H$ by
\begin{align*}
 A(D_{\rm str})
   &=
   \bigcup_{v\in V} C_v
   \ \cup\
   \{uv:uv\in A(S)\},\\
 A(D_{\rm anti})
   &=
   \bigcup_{v\in V} F_v
   \ \cup\
   \{uv:uv\in A\setminus A(S)\},
\end{align*}
where in the two displayed sets $uv$ denotes the linking arc between the corresponding ports. These subdigraphs are
arc-disjoint by Lemma~\ref{lem:switch}.

Every $Q_v[C_v]$ is strong.  After contracting the switch copies, the quotient of $D_{\rm str}$ is precisely $S$. Equivalently, directed
paths in $S$ can be lifted through the strong switch copies, and hence $D_{\rm str}$ is strong.

For every $v$, the graph $\cB(Q_v[F_v])$ is connected. The linking arcs of $D_{\rm anti}$ join these connected bipartite blocks according
to the connected graph $\UG(H-A(S))$.  Hence $\cB(D_{\rm anti})$ is connected, and $D_{\rm anti}$ is antistrong by Theorem~\ref{thm:bipartite}.  Thus $\widehat H$ contains the required pair of arc-disjoint spanning subdigraphs.

Conversely, suppose that $\widehat H$ contains arc-disjoint spanning
subdigraphs $D_{\rm anti},D_{\rm str}$ such that $D_{\rm anti}$ is antistrong and $D_{\rm str}$ is strong.  Let
\begin{align*}
 X&=\{uv\in A(H):\text{the linking arc $uv$ belongs to }D_{\rm str}\},\\
 Y&=\{uv\in A(H):\text{the linking arc $uv$ belongs to }D_{\rm anti}\}.
\end{align*}
Contract every copy $Q_v$ in $D_{\rm str}$.  By
Lemma~\ref{lem:quotient}(i), the resulting quotient is strong.  Its non-loop arcs are exactly the linking arcs corresponding to $X$.
Consequently, the spanning subdigraph $H[X]$ is strong.

Since $D_{\rm anti}$ is antistrong, $\cB(D_{\rm anti})$ is connected.
In this bipartite graph identify, for every $v\in V$, all eight bipartite clones corresponding to the four vertices of $Q_v$.
By Lemma~\ref{lem:quotient}(ii), the quotient is connected.  Its non-loop edges are exactly the edges represented by the linking arcs corresponding to $Y$.  Therefore $\UG(H[Y])$ is connected.

The two packed subdigraphs are arc-disjoint, so $X\cap Y=\emptyset$.
Hence $\UG(H-X)$ contains the connected spanning subgraph
$\UG(H[Y])$ and is connected.  Thus $H[X]$ is a spanning strong subdigraph whose deletion leaves a connected underlying graph, and
$H$ is a yes-instance of Problem~\ref{prob:NSSS}.  This proves NP-hardness.

It remains to verify the degree promise. The port $0$ has two outgoing and three incoming arcs inside $Q$, while every other switch
vertex has both in-degree and out-degree at most three.  Since $H$ is 2-regular, at most two linking arcs enter and at most two linking arcs
leave each port.  Thus
$\Delta^+(\widehat H)\le 4$
and $\Delta^-(\widehat H)\le 5$.
\end{proof}

\section{Packing an antistrong and a 2-edge-connected spanning subdigraph}\label{sec:2ec}

We now turn to Problem~\ref{prob:AS-2EC}.  The second reduction is independent of the four-vertex switch used in Section~\ref{sec:strong}.  Apart from the bipartite
characterization of antistrong connectivity, it uses a different construction.  Its local ingredient is a seven-vertex oriented core, which allows us to avoid opposite arcs altogether.

Let $R$ be the digraph on vertex set $\{0,1,\ldots,6\}$ whose arc set is the disjoint union of
\begin{align*}
 C_7&=\{01,12,23,34,45,56,60\},\\
 F_7&=\{30,04,50,13,41,15,61,24,52,26,35,36,46\}.
\end{align*}
The core is shown in Figure~\ref{fig:oriented-core-R}.

\begin{figure}[htbp]
\centering
\begin{tikzpicture}
  \foreach \i/\ang in
    {0/90,1/38.571,2/-12.857,3/-64.286,
     4/-115.714,5/-167.143,6/141.429}
    \node[dvertex] (r\i) at (\ang:3.25cm) {$\i$};

  \draw[cyclearc] (r0) -- (r1);
  \draw[cyclearc] (r1) -- (r2);
  \draw[cyclearc] (r2) -- (r3);
  \draw[cyclearc] (r3) -- (r4);
  \draw[cyclearc] (r4) -- (r5);
  \draw[cyclearc] (r5) -- (r6);
  \draw[cyclearc] (r6) -- (r0);

  \draw[antarc] (r3) to[bend left=0] (r0);
  \draw[antarc] (r0) to[bend left=2] (r4);
  \draw[antarc] (r5) to[bend right=10] (r0);
  \draw[antarc] (r1) to[bend right=10] (r3);
  \draw[antarc] (r4) to[bend left=2] (r1);
  \draw[antarc] (r1) to[bend left=2] (r5);
  \draw[antarc] (r6) to[bend right=10] (r1);
  \draw[antarc] (r2) to[bend right=10] (r4);
  \draw[antarc] (r5) to[bend left=2] (r2);
  \draw[antarc] (r2) to[bend left=2] (r6);
  \draw[antarc] (r3) to[bend right=10] (r5);
  \draw[antarc] (r3) to[bend left=0] (r6);
  \draw[antarc] (r4) to[bend right=10] (r6);

  \node[font=\small,anchor=west] at (4.2,0.65)
    {solid: $C_7$};
  \draw[cyclearc] (4.2,0.25) -- (5.25,0.25);

  \node[font=\small,anchor=west] at (4.2,-0.15)
    {dashed: $F_7$};
  \draw[antarc] (4.2,-0.55) -- (5.25,-0.55);
\end{tikzpicture}
\caption{The seven-vertex oriented core $R$.  Solid arcs form the
directed cycle $C_7$, and dashed arcs form $F_7$.}
\label{fig:oriented-core-R}
\end{figure}

\begin{lemma}\label{lem:oriented-core}
The digraph $R$ is oriented, $R[C_7]$ is a directed $7$-cycle, and $R[F_7]$ is antistrong.
\end{lemma}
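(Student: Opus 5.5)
The plan is to check the three assertions of Lemma~\ref{lem:oriented-core} directly from the displayed arc sets. For antistrong connectivity I will use the same device as in the proof of Lemma~\ref{lem:switch}: Theorem~\ref{thm:bipartite} applies because $R$ has order $7\ge 3$, so it suffices to show that $\cB(R[F_7])$ is connected.

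\emph{The cycle.} The claim that $R[C_7]$ is a directed $7$-cycle is immediate, since $C_7=\{i(i+1): i\in\mathbb Z_7\}$.

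\emph{Orientedness.} I will check that no arc of $C_7\cup F_7$ has its reverse in $C_7\cup F_7$.
\begin{itemize}
\item The reverses of the cycle arcs are $10,21,32,43,54,65,06$. None of these lies in $F_7$, and none lies in $C_7$.
\item The reverses of the arcs of $F_7$ are $03,40,05,31,14,51,16,42,25,62,53,63,64$. None of these occurs in $F_7$ or in $C_7$.
\end{itemize}
Hence $R$ contains no digon. As a sanity check, the $20$ arcs then occupy $20$ of the $21$ unordered pairs of $\{0,\dots,6\}$.

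\emph{Antistrong connectivity.} The graph $\cB(R[F_7])$ has $14$ vertices and exactly $13$ edges, one for each arc of $F_7$. I will therefore exhibit it as a tree by ordering the edges so that each new edge attaches a new clone to the part already built:
\[
0'4'',\ 2'4'',\ 2'6'',\ 3'6'',\ 4'6'',\ 3'0'',\ 3'5'',\ 5'0'',\ 5'2'',\ 1'5'',\ 1'3'',\ 4'1'',\ 6'1''.
\]
Concretely, starting from $0'$, these edges successively reach
\[
4'',\ 2',\ 6'',\ 3',\ 4',\ 0'',\ 5'',\ 5',\ 2'',\ 1',\ 3'',\ 1'',\ 6'.
\]
Together with $0'$, these are all fourteen clones. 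So $\cB(R[F_7])$ is connected (indeed a spanning tree), and Theorem~\ref{thm:bipartite} gives that $R[F_7]$ is antistrong.

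The only step needing care is the last one: finding an ordering in which each edge really does attach a new vertex. I have checked the ordering above by hand against the list of arcs in $F_7$. The remaining verifications are purely mechanical.
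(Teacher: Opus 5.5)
Your proof is correct and follows essentially the same route as the paper: you check directly that no arc of $C_7\cup F_7$ has its reverse present, and you show that the thirteen edges of $\cB(R[F_7])$ form a spanning tree on its fourteen clones before invoking Theorem~\ref{thm:bipartite}. The only difference is presentation: you certify the tree by an edge ordering in which each edge adds a new clone, whereas the paper writes it as a union of four paths.
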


\begin{proof}
No unordered pair of vertices occurs in both $C_7$ and $F_7$, and no two arcs in their union are opposite.  Thus $R$ is oriented, while the assertion about $C_7$ is immediate.

It remains to verify that $R[F_7]$ is antistrong. In the bipartite representation $\cB(R[F_7])$, consider the four paths
\begin{align*}
 &3'-0''-5'-2'',\\
 &3'-5''-1'-3'',\\
 &3'-6''-2'-4''-0',\\
 &6''-4'-1''-6'.
\end{align*}
The first three paths meet at $3'$, while the fourth meets the third at $6''$. Hence their union is connected. It contains all fourteen vertices of $\cB(R[F_7])$ and exactly thirteen edges, namely the edges represented by the arcs of $F_7$. Thus this union is a tree. In particular, $\cB(R[F_7])$ is connected, and Theorem~\ref{thm:bipartite} implies that $R[F_7]$ is antistrong.
\end{proof}

We shall also use the following immediate observation:
\begin{equation}\label{eq:underlying-core}
   \UG(R)\cong K_7-02,
\end{equation}
since the twenty unordered pairs represented by $C_7\cup F_7$ are all distinct and comprise all pairs of core vertices except $\{0,2\}$.

We use the following NP-complete problem. Here a cubic graph is a 3-regular graph.

\begin{problem}
\label{prob:PCBHAM}
The input is a planar cubic bipartite graph $G$. Decide whether $G$ contains a Hamiltonian path.
\end{problem}

Problem~\ref{prob:PCBHAM} is NP-complete by a theorem of
Munaro~\cite[Theorem~23]{Munaro}.  Note that every yes-instance is connected.  Hence, when reducing from
Problem~\ref{prob:PCBHAM}, a disconnected source instance can be recognized in polynomial time and sent to a fixed no-instance satisfying all the promises in Theorem~\ref{thm:main-2EC}.

For completeness, let $D_0$ be the tournament on
$\{0,1,2,3,4\}$ with
\[
 A(D_0)=\{10,02,30,04,21,31,41,32,42,43\}.
\]
It is oriented because it is a tournament, and it is strong because
it contains the spanning directed cycle $0\to 4\to 3\to 2\to 1\to 0$.
Its bipartite representation contains a spanning tree represented by the nine arcs
$02,04,32,42,30,31,43,10,21$,
so $D_0$ is antistrong.  Moreover, $\UG(D_0)=K_5$.
which is 3-vertex-connected, and every vertex of $D_0$ has both in-degree and out-degree at most four.

It remains to verify that $D_0$ is a no-instance.  Any antistrong spanning subdigraph $J_1$ of $D_0$ has a connected bipartite representation on ten vertices, and hence $|A(J_1)|\ge 9$.

If $J_2$ is a spanning subdigraph whose underlying graph is 2-edge-connected, then $\delta(\UG(J_2))\ge 2$,
and the handshaking lemma gives $|A(J_2)|=|E(\UG(J_2))|\ge 5$.
Thus two such arc-disjoint spanning subdigraphs would require at
least $9+5=14$ distinct arcs, whereas $D_0$ has only ten arcs.
Therefore $D_0$ is a no-instance.  We may henceforth assume that the
source graph is connected.

Let $G=(U,W;E)$ be a connected planar cubic bipartite graph, and choose
one vertex $u_*\in U$.  Construct a digraph $D(G)$ as follows. Start with one copy of $R$, and add one vertex $x_v$, called a \emph{terminal}, for every $v\in V(G)$. Add the arcs $0x_u,\ 1x_u$ for every $u\in U$;
$x_w0,\ x_w1$ for every $w\in W$; $x_ux_w$ for every $uw\in E$ and $x_{u_*}2$.

The arcs $x_ux_w$ are called \emph{variable arcs}, the arcs
incident with $0$ or $1$ and a terminal $x_v$ are called
\emph{spokes}, and $x_{u_*}2$ is called the \emph{connector}.  All variable arcs are directed from the $U$-side to the $W$-side. By Lemma~\ref{lem:oriented-core}, and because the connector uses the core vertex $2$ whereas the spokes use only $0$ and $1$, the
digraph $D(G)$ is oriented.

The construction has $|V(G)|+7$ vertices and
$|E(G)|+2|V(G)|+21$ arcs, and can therefore be carried out in polynomial time.

\begin{proof}[Proof of Theorem~\ref{thm:main-2EC}]
The problem belongs to $\NP$: antistrong connectivity is checked by connectivity of the bipartite representation, and 2-edge-connectivity of the underlying graph is checked by a bridge
search.

We prove NP-hardness by a polynomial reduction from
Problem~\ref{prob:PCBHAM}.  Let $G=(U,W;E)$ be a connected instance,
and let $D(G)$ be the oriented graph constructed above.  We claim that $G$ has a Hamiltonian path if and only if $D(G)$ is a yes-instance of the decision problem described in
Problem~\ref{prob:AS-2EC}.
Suppose first that $P$ is a Hamiltonian path of $G$, and let $S=E\setminus E(P)$.  Define two spanning subdigraphs of $D(G)$ by
\begin{align*}
 A(D_{\rm anti})={}&F_7
   \cup\{1x_u:u\in U\}
   \cup\{x_w1:w\in W\}
   \cup\{x_{u_*}2\}\\
   &\cup\{x_ux_w:uw\in E(P)\},\\
 A(D_2)={}&C_7
   \cup\{0x_u:u\in U\}
   \cup\{x_w0:w\in W\}
   \cup\{x_ux_w:uw\in S\}.
\end{align*}
The two arc sets are disjoint.

Consider in $\cB(D_{\rm anti})$ the set $L=\{x_u':u\in U\}\cup\{x_w'':w\in W\}$.
The variable arcs corresponding to $E(P)$ induce on $L$ a graph
isomorphic to the Hamiltonian path $P$.  The connector is represented
by the edge $x_{u_*}'2''$ and therefore joins this path to the
connected graph $\cB(R[F_7])$.  For every $u\in U$, the spoke $1x_u$
attaches the remaining clone $x_u''$ to the core, and for every
$w\in W$, the spoke $x_w1$ attaches the remaining clone $x_w'$ to the
core.  Thus $\cB(D_{\rm anti})$ is connected, and
$D_{\rm anti}$ is antistrong.

Since $G$ is cubic, every internal vertex of $P$ is incident with one
edge of $S$, and each end of $P$ is incident with two edges of $S$.
In particular,
\begin{equation}\label{eq:S-meets-forward}
   d_S(v)\ge 1\qquad\text{for every }v\in V(G).
\end{equation}
The graph $\UG(D_2)$ contains the core cycle $\UG(R[C_7])$.  Every terminal $x_v$ has a spoke to $0$.  For each $uw\in S$, the variable
edge $x_ux_w$ and the two corresponding spokes form the triangle $0x_ux_w0$.
Every variable edge lies on such a triangle, and by
\eqref{eq:S-meets-forward}, every spoke lies on at least one such
triangle.  The core edges lie on the core $7$-cycle.  Hence
$\UG(D_2)$ is connected and every one of its edges lies on a cycle.
It is therefore 2-edge-connected.

Conversely, suppose that $D(G)$ contains arc-disjoint spanning
subdigraphs $D_{\rm anti}$ and $D_2$, where $D_{\rm anti}$ is
antistrong and $\UG(D_2)$ is 2-edge-connected.  Let
\begin{align*}
 T&=\{uw\in E:x_ux_w\in A(D_{\rm anti})\},\\
 S&=\{uw\in E:x_ux_w\in A(D_2)\}.
\end{align*}
Clearly $S\cap T=\emptyset$.

We first show that $(V(G),T)$ is connected. In the full bipartite representation $\cB(D(G))$, let
\[
   L=\{x_u':u\in U\}\cup\{x_w'':w\in W\}.
\]
By construction,
\begin{equation}\label{eq:unique-connector-cut}
   \delta_{\cB(D(G))}(L)=\{x_{u_*}'2''\}.
\end{equation}
Indeed, every variable edge has both endpoints in $L$, every spoke is incident with one of the other terminal clones, and the connector is the only remaining edge incident with $L$.

Since $\cB(D_{\rm anti})$ is connected,
\eqref{eq:unique-connector-cut} forces the connector to belong to $D_{\rm anti}$.  Moreover, $\cB(D_{\rm anti})[L]$ must be connected.  Otherwise, a component not containing $x_{u_*}'$ would
have no edge to the rest of $\cB(D_{\rm anti})$. The only ambient edges with both endpoints in $L$ are the variable edges. Consequently, the graph $(V(G),T)$ is connected and spanning.

We next show that
\begin{equation}\label{eq:S-positive-oriented}
   d_S(v)\ge 1\qquad\text{for every }v\in V(G).
\end{equation}
For $u\in U$, the vertex $x_u''$ of the bipartite representation is
incident in the whole ambient graph only with the two spoke edges $0'x_u''$ and $1'x_u''$.  Connectivity of $\cB(D_{\rm anti})$ forces at least one of the corresponding spokes to
belong to $D_{\rm anti}$.  Similarly, for $w\in W$, the clone $x_w'$ is incident only with the two spoke edges $x_w'0''$ and $x_w'1''$, so again $D_{\rm anti}$ uses at least one spoke at $x_w$.

By arc-disjointness, $D_2$ therefore uses at most one of the two spokes at every terminal. Apart from its two spokes, every terminal is incident only with variable arcs, except that $x_{u_*}$ is also incident with the connector. The connector is unavailable to $D_2$, because it has already been forced into $D_{\rm anti}$.

Since a 2-edge-connected graph has minimum degree at least two, every terminal must therefore be incident in $D_2$ with at least one variable edge. This proves
\eqref{eq:S-positive-oriented}.

Finally, cubicity and arc-disjointness give $d_T(v)+d_S(v)\le 3$ for every $v\in V(G)$.
Together with \eqref{eq:S-positive-oriented}, this yields $d_T(v)\le 2$ for every $v$. The graph $(V(G),T)$ is connected and spanning, so it is a Hamiltonian path or a Hamiltonian cycle.
In the latter case, deleting one edge produces a Hamiltonian path. Hence $G$ has a Hamiltonian path, completing the equivalence.

It remains to verify the additional promises. The directed cycle $R[C_7]$ makes the core strong. The core reaches every $x_u$ directly through $0x_u$, and reaches every $x_w$ through a two-arc
path $0x_ux_w$, where $u$ is any neighbour of $w$.  Conversely, every $x_w$ reaches the core through $x_w0$, and every $x_u$ reaches the core through
$x_ux_w0$, where $w$ is any neighbour of $u$. Thus $D(G)$ is strong.

The graph $\cB(D(G))$ is connected as well. Indeed, use the connected core $\cB(R[F_7])$, all spokes incident with $1$, the connector, and all variable arcs. Since $G$ is connected, the variable edges connect all vertices of $L$, the connector joins $L$ to the core, and the spokes attach the remaining terminal clones. Hence $D(G)$ is antistrong.

We next prove that $\UG(D(G))$ is 3-vertex-connected. Let $H=\UG(D(G))$.
By \eqref{eq:underlying-core}, the subgraph of $H$ induced by the core vertices is isomorphic to $K_7-02$.

Let $Z\subseteq V(H)$ with $|Z|\le 2$. If at least one of the vertices $0,1$ is not in $Z$, then every remaining terminal is adjacent to a remaining vertex in $\{0,1\}$. Moreover, the remaining core is connected, since deleting at most two vertices from $K_7-02$ leaves a connected graph. Hence $H-Z$ is connected.

The only remaining case is
$Z=\{0,1\}$. The terminal vertices then induce a copy of the connected graph $G$, the remaining core is connected, and the connector
$x_{u_*}2$ joins the terminal copy to the core. Thus $H-\{0,1\}$ is connected. We conclude that $H$ is 3-vertex-connected. In particular, $H$ is also 2-edge-connected.

Finally, only the core vertices $0$ and $1$ can have in-degree or out-degree growing with $|V(G)|$. For the terminals,
\[
\begin{aligned}
 \bigl(d_{D(G)}^+(x_u),d_{D(G)}^-(x_u)\bigr)
   &=(3,2)
   &&\text{for }u\in U\setminus\{u_*\},\\
 \bigl(d_{D(G)}^+(x_{u_*}),d_{D(G)}^-(x_{u_*})\bigr)
   &=(4,2),\\
 \bigl(d_{D(G)}^+(x_w),d_{D(G)}^-(x_w)\bigr)
   &=(2,3)
   &&\text{for }w\in W.
\end{aligned}
\]
For the remaining core vertices, direct inspection gives
\[
\begin{array}{c|ccccc}
 v & 2 & 3 & 4 & 5 & 6\\ \hline
 d_{D(G)}^+(v) & 3 & 4 & 3 & 3 & 2\\
 d_{D(G)}^-(v) & 3 & 2 & 3 & 3 & 4
\end{array}
\]
where the connector is included in the in-degree of $2$. Consequently, every vertex other than $0$ and $1$ has both in-degree and out-degree at most four.

The fixed no-instance $D_0$ also satisfies all these promises, as verified above.  Therefore the reduction proves NP-hardness under
the restrictions stated in Theorem~\ref{thm:main-2EC}.  Since the problem belongs to $\NP$, the proof is complete.
\end{proof}

\section{Concluding remarks}
Theorems~\ref{thm:main-SC} and~\ref{thm:main-2EC} show that the decision problems underlying Questions~9.1 and~9.2 of Bang-Jensen et al.~\cite{BBJK} are NP-complete.  Consequently,
both questions have negative answers unless $\mathsf{P}=\mathsf{NP}$.  The reductions also show that the difficulty is not caused by testing either connectivity property in isolation.  Antistrong connectivity is recognized by an ordinary connectivity test in the bipartite  representation, strong connectivity is recognized by strongly connected components, and 2-edge-connectivity is recognized by bridge detection. The hardness
lies in allocating a common arc set between two incompatible spanning requirements.

The first reduction already gives bounded in-degree and
out-degree. The second reduction gives a different strengthening:
the underlying graph is 3-vertex-connected, and only the two core vertices $0$ and $1$ can have unbounded in-degree or out-degree.
A natural remaining question is whether these two exceptional routing vertices can be replaced by bounded-degree gadgets, thereby
obtaining NP-completeness for a fully bounded-semidegree class while retaining the oriented, strong, antistrong and 3-vertex-connected promises.

\section*{Acknowledgements}

J.~Ai, H.~Lei and Y.~Shi were supported by the Fundamental and Interdisciplinary Disciplines Breakthrough Plan of the Ministry of
Education of China (JYB2025XDXM207).


\begin{thebibliography}{99}
\setlength{\itemsep}{0pt}

\bibitem{BBJK}
J.~Bang-Jensen, S.~Bessy, B.~Jackson and M.~Kriesell,
Antistrong digraphs,
\emph{J. Combin. Theory Ser. B} 122 (2017), 68--90.

\bibitem{BJGbook}
J.~Bang-Jensen and G.~Gutin,
\emph{Digraphs: Theory, Algorithms and Applications}, 2nd ed.,
Springer, London, 2009.

\bibitem{BJGY}
J.~Bang-Jensen, G.~Gutin and A.~Yeo,
Arc-disjoint strong spanning subdigraphs of semicomplete compositions,
\emph{J. Graph Theory} 95 (2020), 267--289.

\bibitem{BJKsurvey}
J.~Bang-Jensen and M.~Kriesell,
Disjoint sub(di)graphs in digraphs,
\emph{Electron. Notes Discrete Math.} 34 (2009), 179--183.

\bibitem{BJY}
J.~Bang-Jensen and A.~Yeo,
Decomposing $k$-arc-strong tournaments into strong spanning
subdigraphs,
\emph{Combinatorica} 24 (2004), 331--349.

\bibitem{BY}
J.~Bang-Jensen and A.~Yeo,
Arc-disjoint spanning sub(di)graphs in digraphs,
\emph{Theoret. Comput. Sci.} 438 (2012), 48--54.

\bibitem{Edmonds}
J.~Edmonds,
Edge-disjoint branchings,
in R.~Rustin (ed.), \emph{Combinatorial Algorithms},
Courant Computer Science Symposium 9,
Academic Press, New York, 1973, 91--96.


\bibitem{GJT}
M.~R.~Garey, D.~S.~Johnson and R.~E.~Tarjan,
The planar Hamiltonian circuit problem is NP-complete,
\emph{SIAM J. Comput.} 5 (1976), 704--714.

\bibitem{Munaro}
A.~Munaro,
On line graphs of subcubic triangle-free graphs,
\emph{Discrete Math.} 340 (2017), 1210--1226.

\bibitem{NashWilliams}
C.~St.~J.~A.~Nash-Williams,
Edge-disjoint spanning trees of finite graphs,
\emph{J. London Math. Soc.} 36 (1961), 445--450.

\bibitem{SGA}
Y.~Sun, G.~Gutin and J.~Ai,
Arc-disjoint strong spanning subdigraphs in compositions and products
of digraphs,
\emph{Discrete Math.} 342 (2019), 2297--2305.

\bibitem{Tutte}
W.~T.~Tutte,
On the problem of decomposing a graph into $n$ connected factors,
\emph{J. London Math. Soc.} 36 (1961), 221--230.

\end{thebibliography}
\end{document}